\newcommand{\prob}[6]{%
  \needspace{3\baselineskip}
  \begin{quote}
    \begin{labeling}{#6}%
    \item[#1]
    \item[\emph{#2}]#3
    \item[\emph{#4}]#5
    \end{labeling}%
  \end{quote}%
}
\newcommand{\probdef}[3]{\prob{#1}{Instance:}{#2}{Question:}{#3}{as}}
\newcommand{\N}{\mathbb{N}}
\newcommand{\cons}{\ensuremath{\mathrm{cons1}}}
\newcommand{\splits}{\ensuremath{\mathrm{splits}}}
\newcommand{\conebmin}{\textup{\textsc{Consecutive Block Minimization}}}
\newcommand{\hampath}{\textup{\textsc{Hamiltonian Path}}}
\newcommand{\hamdist}{\ensuremath{d_h}}
\renewcommand{\orcidID}[1]{\href{https://orcid.org/#1}{\includegraphics[scale=.03]{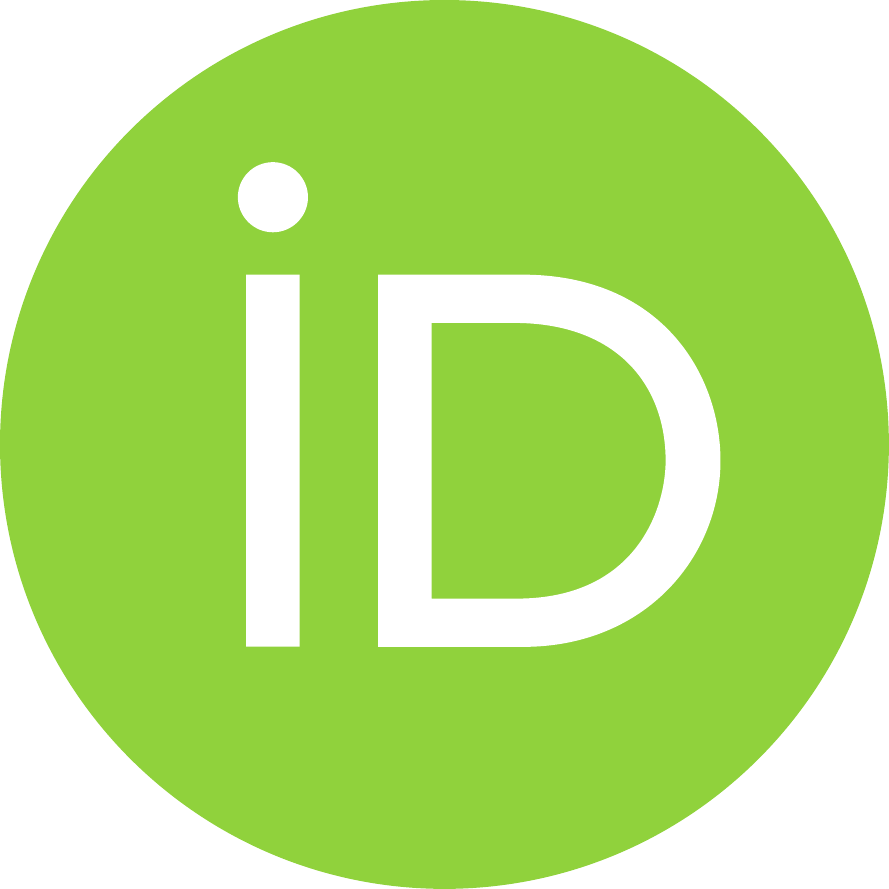}}} 
\begin{document}
\title{On Computing Optimal Linear Diagrams}
\author{Alexander Dobler \orcidID{0000-0002-0712-9726}
\and
  Martin Nöllenburg \orcidID{0000-0003-0454-3937}
}
\authorrunning{A. Dobler and M. Nöllenburg}
\institute{Algorithms and Complexity Group, TU Wien, Vienna, Austria \email{\{adobler,noellenburg\}@ac.tuwien.ac.at}}
\maketitle              %
\begin{abstract}
  Linear diagrams are an effective way to visualize set-based data by representing elements as columns and sets as rows with one or more horizontal line segments, whose vertical overlaps with other rows indicate set intersections and their contained elements. The efficacy of linear diagrams heavily depends on having few line segments. The underlying minimization problem %
  has already been explored heuristically, but its computational complexity has yet to be classified. In this paper, we show that minimizing line segments in linear diagrams is equivalent to a well-studied \NP-hard problem, and extend the \NP-hardness to a restricted setting. %
  We develop new algorithms for computing linear diagrams with minimum number of line segments that build on a traveling salesperson (TSP) formulation %
  and allow constraints on the element orders, namely, forcing two sets to be drawn as single line segments, giving weights to sets, and allowing hierarchical constraints via PQ-trees. We conduct an experimental evaluation and compare previous algorithms for minimizing line segments with our TSP formulation, showing that a state-of-the art TSP-solver can solve all considered instances optimally, most of them within few milliseconds.
  \keywords{Linear Diagrams  \and Consecutive Ones \and TSP \and \NP-hardness\and Algorithm Benchmarking}
\end{abstract}
\section{Introduction}
Many real-world datasets represent set systems, and there is a vast landscape of different visualization techniques for set-based data. Two well-known techniques are Euler and Venn Diagrams that draw sets as closed curves and set intersections are represented by intersections of the boundaries of these curves. For a detailed survey of these and other set visualizations we refer to Alsallakh et al.\ \cite{alsallakhStateoftheArtSetVisualization2016}.

The set visualization that we study in this paper are linear diagrams. 
It has been demonstrated that they are simple and effective, and have advantages when compared with other set visualizations \cite{chapmanVisualizingSetsEmpirical2014,luzComparisonLinearMosaic2019,stapletonEfficacyEulerDiagrams2019}. Linear diagrams represent elements as columns and sets as rows of a matrix or table, where in each row there are one or more horizontal line segments indicating which elements are contained in a specific set. Vertical overlaps of these line segments in different rows show set intersections, and the corresponding elements.
\begin{figure}[t]
  \centering
  \begin{subfigure}{.9\textwidth}
  \includegraphics[width=\textwidth]{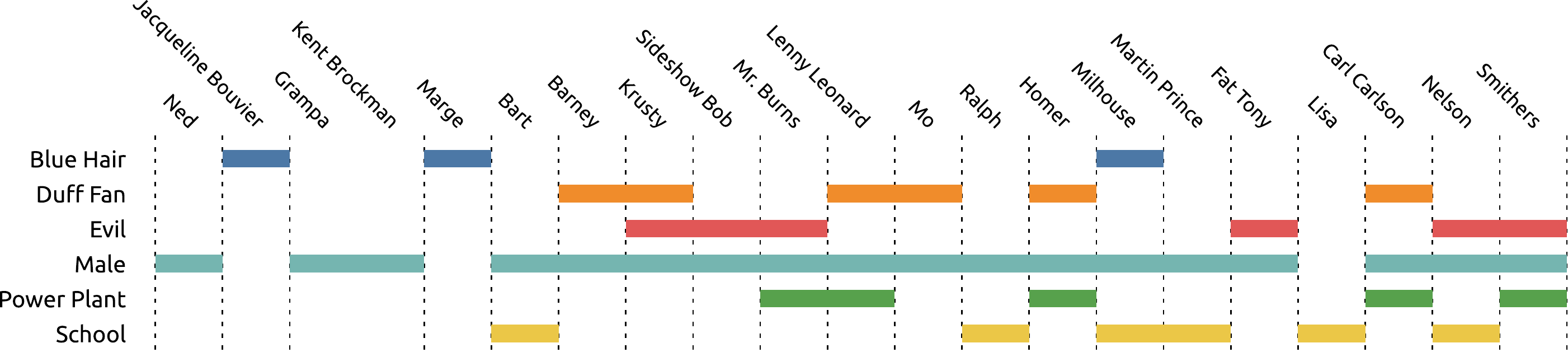}
  \caption{Random ordering of the overlaps.}
  \label{fig:simpsonsbad}
  \end{subfigure}
  \begin{subfigure}{.9\textwidth}
    \includegraphics[width=\textwidth]{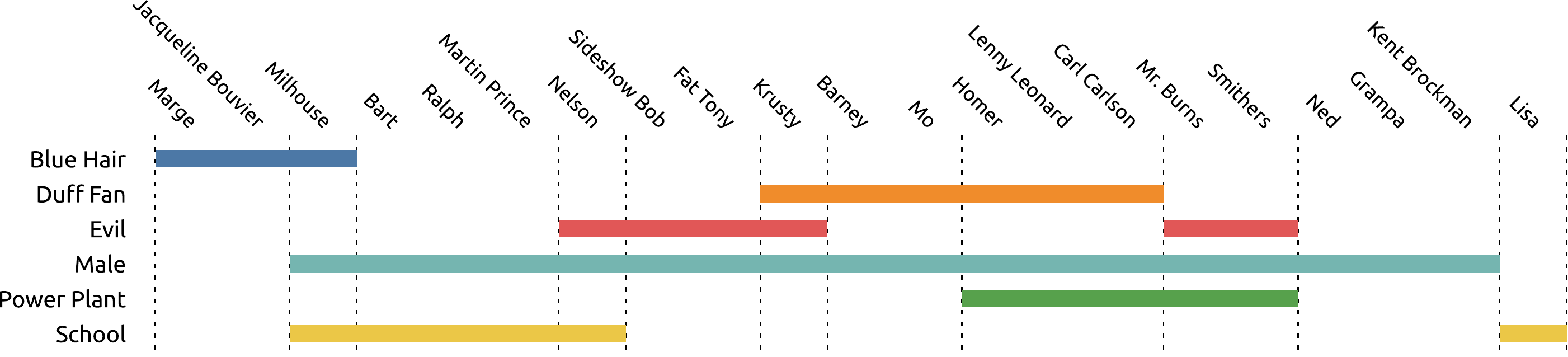}
    \caption{An overlap ordering that minimizes the total number of line segments.}
    \label{fig:simpsonsgood}
  \end{subfigure}
  \caption{Linear diagrams representing the Simpsons.}
  \label{fig:simpsons}
\end{figure}
\cref{fig:simpsonsbad} shows a linear diagram representing a Simpsons data set introduced by Jacobsen et al.~\cite{jacobsenMetroSetsVisualizingSets2021}.
For example, the set Blue Hair contains the elements Jacquelin Bouvier, Marge, and Milhouse, and is drawn with three line segments.
Mr.\ Burns is contained in the sets Evil, Male, and Power Plant, as represented by the corresponding vertical overlap of the line segments in these three rows with the column of Mr.\ Burns.

Linear diagrams can be drawn in many ways, e.g., by choosing different permutations of the rows/sets and columns/overlaps.
It has been shown that there are several quality criteria for linear diagrams, while the most important one is finding an ordering of the elements that minimizes the number of line segments~\cite{rodgersVisualizingSetsLinear2015}. 
For example, the linear diagram depicted in \cref{fig:simpsonsgood} shows the same set system as before, but with an ordering of the overlaps that minimizes the number of line segments, here using 8 segments instead of 23.

The underlying computational problem of finding an ordering of the overlaps that minimizes line segments seems hard, as for $n$ overlaps, there are $n!$ different orderings of these overlaps. Finding orderings that minimize line segments is mainly done via heuristics in the literature~\cite{chapmanDrawingAlgorithmsLinear2021,fedorSupervennPreciseEasytoread2022,rodgersVisualizingSetsLinear2015}. The main topic of this paper is computing \emph{optimal linear diagrams} -- those which realize the minimum possible number of line segments that have to be drawn.

\paragraph*{Related work.}
Several user studies were performed to compare the efficacy of linear diagrams and other diagram types; they showed that linear diagrams perform equally well or better than other diagram types including Euler and Venn diagrams~\cite{chapmanVisualizingSetsEmpirical2014,luzComparisonLinearMosaic2019,satoEfficacyDiagramsSyllogistic2012}.
Linear diagrams have then been used, e.g., to visualize sets over time \cite{masoodianTemporalVisualizationSets2018}, and Lamy et al.\ \cite{lamyRainbowBoxesNew2017} extended linear diagrams to allow multiple sets per row.

Existing algorithms for minimizing line segments in linear diagrams are of heuristic nature, i.e., they may often find good solutions, but do not provide proven guarantees on the solution quality.
Rodgers et al.~\cite{rodgersVisualizingSetsLinear2015} presented a simple heuristic that first defines a pair-wise similarity between two overlaps based on the number of sets they have in common. Then, this heuristic iteratively builds an overlap ordering aiming to group similar overlaps next to each other.
Chapman et al.~\cite{chapmanDrawingAlgorithmsLinear2021} compared different heuristics based on simulated annealing, a travelling salesperson (TSP) formulation, and other variants of the heuristic of Rodgers et al.~\cite{rodgersVisualizingSetsLinear2015}.
A GitHub project \cite{fedorSupervennPreciseEasytoread2022} provides an implementation of linear diagrams in Python. 
The underlying algorithm tries to minimize the number of line segments by applying multiple runs of an iterative greedy heuristic, each with a different pair-wise similarity measure between overlaps that is augmented by random seeds.

\paragraph*{Contribution and structure.}
We further investigate the computational problem of computing optimal linear diagrams.
\cref{sec:prelim} defines general preliminaries and notation for permutations, matrices, and graphs.
In \cref{sec:conblockmin}, we describe how the problem of computing optimal linear diagrams can be modelled as a known problem on binary matrices, thus bridging the gap missing in the literature\footnote{This observation has been made independently in a paper that was published after the first version of our paper \cite{chapmanMinimisingLineSegments2022}.}.
This problem is known to be \NP-complete; we further strengthen this \NP-completeness result by showing that computing optimal linear diagrams is even \NP-complete for set systems where each set contains exactly two elements and each element is contained in exactly three sets. Moreover, we present further literature on matrix problems that are relevant with regard to linear diagrams.

In \cref{sec:TSP}, we present a way to compute optimal linear diagrams by reducing the problem to TSP, thus, completing the work of Chapman et al.~\cite{chapmanDrawingAlgorithmsLinear2021}. They also presented an algorithm based on a TSP formulation, but this algorithm sometimes produces non-optimal overlap orderings. We further expand on this formulation, showing that we can model specific constraints on the overlap orders. Namely, we can force up to two sets to be drawn as single line segments while still minimizing the number of line segments. This is particularly interesting for allowing interactivity in linear diagrams~\cite{chapmanInteractivityLinearDiagrams2021}. 
We also show how to model constraints based on weighted sets and hierarchical ordering constraints represented by PQ-trees, which is of interest for certain set visualization tasks. 

In \cref{sec:experiments}, we conduct an experimental evaluation of our algorithms from \cref{sec:TSP}, and compare them with the state-of-the art heuristics. We show that a state-of-the-art TSP-solver can solve all considered instances optimally, most of them within few milliseconds.
We also verify that the considered heuristics from the literature perform well with regard to the number of line segments, where the average optimality gaps of the heuristics are less than ten percent. 

\section{Preliminaries}\label{sec:prelim}
Let~$A$ be a matrix with~$m$ rows and~$n$ columns; we set $n_A=n$ and $m_A=m$.
We write~$A_{i,j}$ with $1\le i\le m$ and $1\le j\le n$ for the \emph{entry} of~$A$ at row~$i$ and column~$j$.
Furthermore, by~$r^A_i$ and~$c^A_j$ we denote the~$i$-th row and~$j$-th column of~$A$, respectively.
A matrix is a \emph{binary matrix} if all its entries are either 0 or 1.
If it is clear from the context, we might omit explicitly mentioning the matrix~$A$ in
the above notations.

We denote by~$[k]$ the set of elements $\{1,\dots ,k\}$.
A \emph{permutation} $\pi:[k]\to X$ is a bijective function from~$[k]$ to a set~$X$.
Sometimes we write permutations~$\pi$ as sequences of elements, that is, $\pi=(x_1,\dots, x_n)$ is the permutation such that $\pi(i)=x_i$ for $1\le i\le n$.
We denote by~$\Pi_k$ the set of all permutations from~$[k]$ to~$[k]$.
For two permutations $\pi_1=(x_1,\dots,x_n)$ and $\pi_2=(y_1,\dots,y_m)$, we denote by $\pi_1\star \pi_2$ their \emph{concatenation} $(x_1,\dots x_n,y_1,\dots,y_m)$.
For two sets~$\Pi_1$ and~$\Pi_2$ of permutations, we define $\Pi_1\star \Pi_2=\{\pi_1\star \pi_2\mid \pi_1\in\Pi_1,\pi_2\in\Pi_2\}$.

For a matrix~$A$ and a permutation $\pi:[n_A]\to [n_A]$ we denote by $\pi(A)$ the matrix such that $\pi(A)_{i,j}=A_{i,\pi(j)}$. 
Equivalently $\pi(r^A_i)=r^A_{\pi(i)}$ for a row~$r^A_i$. 
By ``a permutation of the columns of the matrix~$A$'' we mean a permutation $\pi:[n_A]\to [n_A]$.

A block of \emph{consecutive ones} in a row~$r^A_i$ of a matrix~$A$ with~$n$ columns is a maximal non-empty sequence $A_{i,p},A_{i,p+1},\dots,A_{i,q}$ satisfying \begin{itemize}
  \item $A_{i,j}=1$ for all $p\le j\le q$,
  \item $p=1$ or $A_{i,p-1}=0$, and
  \item $q=n$ or $A_{i,q+1}=0$.
\end{itemize}  
For a row~$r^A_i$,~$\cons(r^A_i)$ is the number of  blocks of consecutive ones in~$r^A_i$.
Additionally,~$\splits(r^A_i)$ (the number of gaps between the blocks) is defined as $\cons(r^A_i)-1$ if~$r^A_i$ contains a 1-entry, and 0 otherwise.
We define $\cons(A)=\sum_{i=1}^{m_A}\cons(r^A_i)$ for a matrix~$A$.
Equivalently, $\splits(A)=\sum_{i=1}^{m_A}\splits(r^A_i)$.
Let~$c^A_i$ and~$c^A_j$ be two columns of a binary matrix. By~$\hamdist(c^A_i,c^A_j)$ we denote the Hamming distance between~$c^A_i$ and~$c^A_j$, that is, the number of rows with different values.

In this paper we assume graphs $G$ as simple and undirected.
By~$V(G)$ and~$E(G)$ we denote the vertex set and edge set of~$G$, respectively.
For a binary matrix~$A$, let~$G(A)$ be the complete graph that consists of the vertices $V = \{v_i\mid c^A_i\text{ is a column in }A\}$.
If we talk about a vertex~$v_i$ with index~$i$ in~$G(A)$, we mean the vertex~$v_i$ that corresponds to column~$c_i^A$.

Sometimes we consider graphs~$G(A)$ obtained from a matrix~$A$ with a quadratic and symmetric \emph{distance matrix}~$D$ of size $|V(G)|\times |V(G)|$, such that~$D_{i,j}$ is the \emph{length} of the edge between~$v_i$ and~$v_j$.
A \emph{tour}~$T$ in~$G(A)$ is a sequence of vertices $(v_{i_1},\dots,v_{i_{n}})$ that contains each vertex of~$G(A)$ exactly once. (We do not require adjacency, as~$G(A)$ is complete.)
The \emph{length} of~$T$ in~$G(A)$ under a distance matrix~$D$ is $D_{i_n,i_{1}}+\sum_{k=1}^{n-1}D_{i_k,i_{k+1}}$.
Finding a tour of minimum length in~$G(A)$ under a distance matrix~$D$ is known as the \emph{Travelling Salesperson Problem} (TSP) and is \NP-complete \cite{papadimitriouEuclideanTravellingSalesman1977}.

\section{Complexity of Linear Diagrams}\label{sec:conblockmin}
The most important quality aspect supporting the cognitive effectiveness of linear diagrams is the number of line segments \cite{rodgersVisualizingSetsLinear2015}.
To minimize the number of line segments that have to be drawn, we have to find an appropriate horizontal ordering of the overlaps.
There is a one-to-one correspondence between linear diagrams and binary matrices:
Let $(\mathcal{S},\mathcal{U})$ be a set system with universe $\mathcal{U}=\{u_1,\dots, u_n\}$ and sets $\mathcal{S}=\{S_1,\dots,S_m\}$, hence, for all $i\in [m]$, $S_i\subseteq \mathcal{U}$.
The system~$\mathcal{S}$ can be represented by a binary matrix~$A$ s.t.\ $A_{i,j}=1$ if and only if element~$u_j$ belongs to set~$S_i$.
The rows and columns of~$A$ are exactly the rows and columns of the linear diagram, respectively.
Line segments in the linear diagram correspond to blocks of consecutive ones in the matrix~$A$.
The problem of finding a horizontal ordering of the overlaps that minimizes the number of line segments is equivalent to the problem of finding a permutation $\pi\in\Pi_n$ that minimizes~$\cons(\pi(A))$.

A matrix~$A$ is said to have the \emph{consecutive ones property} (C1P) if there is a permutation $\pi\in\Pi_{n_A}$ with $\splits(\pi(A))=0$.
There are several linear-time algorithms for testing if a matrix has the C1P and for computing the corresponding permutation, the first due to Booth and Lueker \cite{boothTestingConsecutiveOnes1976}.
Thus, we can decide in linear time if a linear diagram can be drawn such that each set is represented by exactly one line segment.

Most of the time though, linear diagrams cannot be drawn in this way.
In this case we want to minimize the number of required line segments.
The corresponding binary matrix problem is known as \emph{consecutive block minimization} in the literature, its decision problem is given below.
\probdef{\conebmin}
{A binary matrix~$A$ and a non-negative integer~$k$.}
{Does there exist a permutation $\pi\in \Pi_{n_A}$ such that \\$\cons(\pi(A))\le k$?}
The problem has been shown to be \NP-complete \cite{kouPolynomialCompleteConsecutive1977}, even if each row contains exactly two ones \cite{haddadiNoteNPHardness2002}.
We give here an alternative proof of \NP-completeness for binary matrices with two ones per row and three ones per column, thus further strengthening the \NP-completeness result.
\begin{theorem}
  \conebmin\ is \NP-complete for matrices with two ones per row and three ones per column.
\end{theorem}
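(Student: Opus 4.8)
The plan is to reduce from \hampath\ restricted to cubic (i.e., $3$-regular) graphs. Membership in \NP\ is immediate, since a permutation~$\pi$ of the columns is a polynomial-size certificate and $\cons(\pi(A))$ is computable in polynomial time. For the hardness direction, I would read a matrix with exactly two ones per row and three ones per column as the edge--vertex incidence matrix of a cubic graph, and reverse this correspondence to build the reduction.

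Concretely, given a cubic graph~$G$ on~$n$ vertices and $m=3n/2$ edges, I would construct~$A$ with one row per edge and one column per vertex, setting $A_{i,j}=1$ exactly when vertex~$j$ is an endpoint of edge~$i$. Every row then has exactly two ones (the two endpoints of its edge) and every column exactly three ones (as~$G$ is cubic), so~$A$ is a legal instance of the restricted problem, and a permutation~$\pi$ of its columns is just a linear arrangement of the vertices of~$G$.

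The crux is a direct formula for $\cons(\pi(A))$. A row whose two ones sit at positions~$p$ and~$q$ of~$\pi$ forms one block of consecutive ones iff $|p-q|=1$, and two blocks otherwise; it thus contributes~$1$ if its edge joins two vertices placed consecutively by~$\pi$ and~$2$ otherwise. Letting $a(\pi)$ denote the number of edges whose endpoints are consecutive in~$\pi$, we obtain $\cons(\pi(A))=2m-a(\pi)$. Since the~$n$ positions of~$\pi$ give only $n-1$ consecutive slots and each slot hosts at most one edge (as~$G$ is simple), we have $a(\pi)\le n-1$, with equality iff every consecutive pair of~$\pi$ is an edge, i.e., iff~$\pi$ traces a Hamiltonian path in~$G$. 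Setting $k=2m-(n-1)$ therefore makes the existence of a~$\pi$ with $\cons(\pi(A))\le k$ equivalent to~$G$ having a Hamiltonian path, which finishes the reduction.

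The main obstacle is not this combinatorial identity, which is routine, but securing the right source problem: I need \hampath\ to remain \NP-complete on \emph{cubic} graphs, since that is precisely what forces two ones per row and three ones per column at the same time. I would invoke the known \NP-completeness of \hampath\ for (planar) cubic graphs. The subtle point to watch is that if only a Hamiltonian \emph{cycle} variant on cubic graphs were available, the usual cycle-to-path gadgets tend to perturb vertex degrees, so some care would be needed to keep the constructed instance exactly $3$-regular.
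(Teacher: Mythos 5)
Your proposal is correct and follows essentially the same route as the paper: a reduction from \hampath\ on degree-3 graphs via the edge--vertex incidence matrix, with threshold $k=2m-(n-1)$ and the observation that simplicity of~$G$ limits the number of rows forming a single block to $n-1$, with equality exactly for Hamiltonian paths. Your explicit identity $\cons(\pi(A))=2m-a(\pi)$ is just a cleaner packaging of the paper's two-directional counting argument, and your caution about needing the \emph{path} (not cycle) variant on cubic graphs is a reasonable point the paper handles by citing the appropriate hardness result.
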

\begin{proof}
  Membership in \NP\ is evident.
  For hardness, we give a reduction from \hampath\ on graphs of degree 3, which is \NP-complete \cite{gareyPlanarHamiltonianCircuit1976}.
  \hampath\ asks for a given graph~$G$, if there is a path in~$G$ that visits every vertex exactly once.
  Let~$G$ be an instance of \hampath\ such that $E(G)=\{e_1,\dots, e_m\}$ and $V(G)=\{v_1,\dots, v_n\}$ and~$G$ has degree 3.
  We construct an instance~$(A,k)$ of \conebmin\ as follows.
  Let~$A$ be the incidence matrix of $G$, which has $n_A=|V(G)|$ columns and $m_A=|E(G)|$ rows with $A_{i,j}=1$ if and only if $v_i\in e_j$.
  Clearly, this matrix has two ones per row, as each edge contains two vertices and 3 ones per column, as $G$ has degree 3.
  We show that~$G$ contains a Hamiltonian path if and only if there exists a permutation~$\pi$ of the columns of~$A$ such that $\cons(\pi(A))\le 2\cdot m-(n-1)$.
  
  ``$\Rightarrow$'': Let $P=(v_{\ell_1},v_{\ell_2}, \dots , v_{\ell_n})$ be a Hamiltonian path in~$G$.
  We claim that $\pi=(\ell_1,\ell_2,\dots, \ell_n)$ satisfies $\cons(\pi(A))\le 2\cdot m-(n-1)$.
  Consider the edges $\{v_{\ell_i},v_{\ell_{i+1}}\}$ for $1\le i \le n-1$, which exist because~$P$ is a path.
  As~$v_{\ell_i}$ and~$v_{\ell_{i+1}}$ are consecutive in~$P$, the columns~$c^A_{\ell_i}$ and~$c^A_{\ell_{i+1}}$ are consecutive in~$\pi(A)$.
  Thus, the row in~$A$ corresponding to the edge $\{v_{\ell_i},v_{\ell_{i+1}}\}$ contributes to exactly one block of consecutive ones.
  The remaining $m-(n-1)$ rows can contribute to at most two blocks of consecutive ones as they only contain two 1-entries each.
  Together, there are at most $n-1+2\cdot(m-(n-1))=2\cdot m-(n-1)$ blocks of consecutive ones in~$\pi(A)$.
  
  ``$\Leftarrow$'': Let $\pi=(\ell_1,\ell_2,\dots ,\ell_n)$ be a permutation of the columns of~$A$ that satisfies $\cons(\pi(A))\le 2\cdot m-(n-1)$.
  We claim that $P=(v_{\ell_1},v_{\ell_2}, \dots , v_{\ell_n})$ is a Hamiltonian path in~$G$.
  There are at least $n-1$ blocks of consecutive ones of size two in~$\pi(A)$ as otherwise $\cons(\pi(A))> 2\cdot m-(n-1)$.
  As $G$ is a simple graph, no two rows of~$A$ contain ones in the same columns and thus each of these blocks of consecutive ones has to start at a different column.
  By the pigeonhole principle, for each $1\le i\le n-1$, there exists such a block of consecutive ones that starts at the~$i$-th column of~$\pi(A)$.
  Hence, $\{v_{\ell_i},v_{\ell_{i+1}}\}$ is an edge in~$G$ for all $1\le i\le n-1$, and~$P$ is a Hamiltonian path.  \qed
\end{proof}
We have been made aware of the fact that this result has been proved independently previously, using the same reduction \cite{haddadiIteratedLocalSearch2021}.

\conebmin\ has been further studied from an algorithmic view.
Several heuristic methods for finding permutations~$\pi$ with small $\cons(\pi(A))$ have been given \cite{haddadiExponentialNeighborhoodSearch2021,haddadiIteratedLocalSearch2021,soaresHeuristicMethodsConsecutive2020}.
Haddadi and Layouni \cite{haddadiConsecutiveBlockMinimization2008} transformed \conebmin\ to a travelling salesperson problem, we will go into more details on their results in \cref{sec:TSP}.

Further variations of consecutive-ones problems that could be interesting for linear diagrams have been studied, mostly giving hardness results or polynomial algorithms assuming that some underlying parameters of the problems are constant:
It has been shown that the problem of finding a permutation~$\pi$ of the columns of a binary matrix~$A$ such that for all $i\in [m_A]$, $\cons(r_i^A)\le k\in \mathbb{N}$ is \NP-complete~\cite{goldbergFourStrikesPhysical1995}, which translates to the problem of having at most $k$ line segments per set in a linear diagram.
Another more involved problem has been studied, called $\textsc{Gapped Consecutive Ones}$, in which we are given a binary matrix~$A$ and want to find a permutation~$\pi$ of the columns of~$A$ such that for all $i\in [m_A]$, $\cons(r_i^A)\le k\in \mathbb{N}$, and the gaps between two consecutive blocks of ones in a row of~$\pi(A)$ is at most some maximum gap parameter~$\delta$ \cite{chauveGappedConsecutiveOnesProperty2009,manuchHardnessResultsGapped2012,manuchComplexityGappedConsecutiveOnes2011}.
Here gaps refer to maximal blocks of zeros between two blocks of ones.

Furthermore, there is literature devoted to turning a binary matrix into a binary matrix that has the C1P by deleting rows, deleting columns, or flipping entries (turning 1-entries into 0-entries and/or turning 0-entries into 1-entries).
Dom et al.\ \cite{domApproximationFixedparameterAlgorithms2010} give a summary of results.

\conebmin\ is also related to a variant of database compression. Given a database consisting of $m$ rows and $n$ columns, one wants to permute the rows of the database such that the number of \emph{runs} is minimized. Runs are essentially consecutive elements in a column that have the same value. When compressing the database, such runs can be saved using constant memory by only saving the start, end, and value of the run. The main difference to our application is that databases are normally comprised of a huge number of rows, sometimes in the millions. Thus, mainly heuristics have been studied, in particular, heuristics that take less than quadratic time in the number of rows. In our context of linear diagrams, algorithms and heuristics consuming quadratic time in the input are no problem. Some algorithms for minimizing runs in databases are for example described in \cite{lemireReorderingRowsBetter2012}.
\section{TSP Model}\label{sec:TSP}
In this section, we describe the procedure of minimizing the number of line segments in a linear diagram by using a TSP model, and give a runtime optimization.
We also show how to incorporate further constraints into this model.

\subsection{Solving Linear Diagrams with TSP}\label{sec:lindiagTSP}
We now present how to solve the task of minimizing the number of line segments drawn in a linear diagram.
Let us start with the key lemma for our model.

\begin{lemma}[\hspace{-0.01pt}\cite{haddadiConsecutiveBlockMinimization2008}]\label{lemma:tour}\sloppy
  Let~$A$ be a binary matrix with $n$ columns and let~$A^\prime$ be the binary matrix obtained from~$A$ by appending a column of zeros to the right of~$A$.
  Let $(v_{i_1},v_{i_2},\dots,v_{i_{n+1}})$ be a tour of length~$L$ in~$G(A^\prime)$ under distance matrix $D_{i,j}=\hamdist(c^{A^\prime}_i,c^{A^\prime}_j)$.
  Assume that $v_{i_k}=v_{n+1}$, corresponding to the appended column of zeros, and let $\pi=(i_{k+1},\dots i_{n+1},i_{1},\dots, i_{k-1})$.
  Then $L=2\cdot \cons(\pi(A))$.
\end{lemma}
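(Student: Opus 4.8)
The plan is to prove the identity row by row, exploiting the additivity of the Hamming distance over rows. Writing $m$ for the number of rows of $A$ (and hence of $A^\prime$), I would first decompose the tour length as $L=\sum_{i=1}^{m}L_i$, where $L_i$ is the contribution of row $i$: since $\hamdist(c^{A^\prime}_p,c^{A^\prime}_q)$ counts the rows in which columns $p$ and $q$ differ, $L_i$ is the number of consecutive pairs in the cyclic tour order --- including the wrap-around pair $(v_{i_{n+1}},v_{i_1})$ --- whose two entries in row $i$ disagree. Equivalently, reading row $i$ of $A^\prime$ along the cyclic order of the tour, $L_i$ is exactly the number of $0/1$ transitions encountered while traversing the cycle once.

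Next I would invoke the elementary fact about cyclic binary sequences: a cyclic $0/1$ sequence containing at least one $0$ has an even number of transitions, equal to twice the number of blocks of consecutive ones, because each such block is flanked by a $0$ on both sides along the cycle. Every row of $A^\prime$ contains at least one $0$ thanks to the appended zero column, so for each row $i$ we obtain $L_i=2b_i$, where $b_i$ is the number of blocks of consecutive ones in row $i$ when the columns are arranged in the cyclic tour order. The all-zero case is immediate: if row $i$ has no $1$-entry, then $L_i=0=2b_i$ with $b_i=0$.

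The crux is to identify $b_i$ with $\cons(r^{\pi(A)}_i)$, and here the appended zero column plays the decisive role. Since $v_{i_k}=v_{n+1}$ is the all-zero column, cutting the cycle at position $k$ passes through a $0$ in \emph{every} row and therefore splits no block of consecutive ones. Consequently the blocks of the cyclic arrangement are in bijection with the blocks of the linear arrangement obtained by opening the cycle at $k$ and deleting the zero column; and by the definition $\pi=(i_{k+1},\dots,i_{n+1},i_1,\dots,i_{k-1})$ this linear arrangement is precisely $\pi(A)$. Hence $b_i=\cons(r^{\pi(A)}_i)$ for every row $i$.

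Putting the pieces together yields $L=\sum_i L_i=\sum_i 2\,\cons(r^{\pi(A)}_i)=2\,\cons(\pi(A))$, as claimed. I expect the main obstacle to be phrasing the cut-at-a-zero argument cleanly: one must make precise that no block of consecutive ones wraps around through the appended zero column, so that the number of blocks is invariant under opening the cycle at that specific position. The transition-counting identity and the row-wise additivity of $\hamdist$ are routine once this point is established.
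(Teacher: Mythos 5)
Your proof is correct. The paper does not prove this lemma itself---it cites it from Haddadi and Layouni and only sketches the intuition that each block of consecutive ones is ``started and ended exactly once'' along the tour---and your row-wise transition-counting argument, with the appended zero column guaranteeing a $0$ in every row of the cyclic sequence and hence a cut point that splits no block, is a rigorous version of exactly that idea.
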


As discussed in \cref{sec:conblockmin}, the task of minimizing line segments in linear diagrams is the same as finding a permutation~$\pi$ of the columns of a binary matrix~$A$ that minimizes~$\cons(\pi(A))$.
One way to find such a permutation is with a TSP-model as outlined by \cref{lemma:tour}:
Let $A$ be a binary matrix with $n$ columns.
We construct the binary matrix~$A^\prime$ by appending a column of zeros to the right of~$A$. %
From the matrix~$A^\prime$, we construct the complete graph~$G(A^\prime)$, such that vertices correspond to columns in~$A^\prime$.
A distance matrix~$D$ for~$G(A^\prime)$ is constructed such that~$D_{i,j}$ is the Hamming distance $\hamdist(c^{A^\prime}_i,c^{A^\prime}_j)$.
We then compute a TSP tour $(v_{i_1},v_{i_2},\dots,v_{i_{n+1}})$ of minimal length in~$G(A^\prime)$.
Assume that~$v_{i_k}$ is the vertex corresponding to the column $c^{A^\prime}_{n_{A^\prime}}$.
Then, by \cref{lemma:tour}, $\pi=(i_{k+1},\dots,i_{n+1},i_1,\dots,i_{k-1})$ is the permutation with minimal~$\cons(\pi(A))$.
The intuition for this is that choosing an edge~$\{v_i,v_j\}$ of small length in~$G(A^\prime)$ is the same as starting or ending few consecutive blocks of ones (corresponding to line segments in a linear diagram) when going from the column~$c_i$ to~$c_j$.
With this argumentation each block of consecutive ones is started and ended exactly once, and the length of the tour is $2\cdot\cons(\pi(A))$.
Note that adding the extra column at the end is necessary, as otherwise it could be that some consecutive blocks of ones, those that start at the first column or end at the last column, are ``not counted in the tour''.

There is a small runtime optimization that can be applied to decrease the size of the graph~$G(A^\prime)$.
Columns of~$A$ that have ones in the same rows, their Hamming distance being zero, can be collapsed into a single column.
The above procedure may be applied to compute the desired permutation of columns, and then the collapsed columns can be expanded again to appear consecutively.
Clearly, this does not influence the number consecutive blocks of ones in the resulting matrix. In terms of set systems, this corresponds to collapsing multiple overlaps that contain the same sets into a single representative. In an optimal linear diagram, such overlaps would never be separated.

We tested this method of computing optimal column orderings by applying a state-of-the art TSP-solver.
We will report on experimental results for real-world and previously considered set visualization instances in \cref{sec:experiments}.
Note that the same procedure has already been applied to instances from consecutive block minimization \cite{soaresHeuristicMethodsConsecutive2020}.

\subsection{Priorities for Sets}\label{sec:prioritiessets}
In some contexts certain sets in a linear diagram might be considered more important than others.
We would want to compute a linear diagram, in which these sets are drawn with a single line segment, but  the other sets should be drawn with as few line segments as possible.
It is clear that forcing more than two sets to be drawn as one line segment is not always possible, as there are binary matrices with three rows that do not have the C1P.
We can solve the problem on binary matrices as a TSP model due to the following result.
\begin{lemma}\label{lemma:hierarchical}
  Let~$A$ be a binary matrix with $n$ columns and exactly~$p$ 1-entries and let $C_1,\dots,C_q\subseteq \{c_1,\dots,c_{n_A}\}$ be a family of non-empty sets of columns of~$A$ satisfying %
  \[\exists \pi\in\Pi_n\forall k\in[q]:\emph{the columns in }C_k\text{ appear consecutively in }\pi(A).\]
  Let~$A^\prime$ be the matrix obtained from~$A$ by appending a column of zeros.
  We consider the graph~$G(A^\prime)$ with distance matrix $D$ s.t.
  \[D_{i,j}=\hamdist(c_i,c_j)+(2p+1)\cdot \sum_{k=1}^q|\mathbbm{1}_{C_k}(c_i)-\mathbbm{1}_{C_k}(c_j)|,\]
  where~$\mathbbm{1}_{C_k}$ is the indicator function for set~$C_k$.
  Let $T=(v_{i_1},v_{i_2},\dots,v_{i_{n+1}})$ be a tour of minimal length in~$G(A^\prime)$ under distance matrix $D$.
  Let~$v_{i_k}=v_{n+1}$, corresponding to the appended column of zeros.
  Then the permutation $\pi=(i_{k+1}, \dots, i_{n+1},i_1,\dots, i_{k-1})$ has the following properties
  \begin{enumerate}
    \item[(1)] For all $k\in [q]$ the columns in~$C_k$ appear consecutively in~$\pi(A)$.
    \item[(2)] Of all $\pi^\prime\in\Pi_n$ that satisfy (1),~$\pi$ is the one with minimum~$\cons(\pi(A))$.
  \end{enumerate}
\end{lemma}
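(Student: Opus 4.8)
The plan is to exploit the two-part structure of the distance matrix~$D$: its first summand is the pure Hamming distance, to which \cref{lemma:tour} already applies, and its second is a ``big-$M$'' penalty of weight $2p+1$ that charges every edge crossing the boundary of some~$C_k$. The entire argument rests on the fact that this penalty coefficient is chosen strictly larger than the largest value the Hamming part of any tour can take, so that a minimum-length tour must \emph{first} minimize the total penalty and only \emph{then} the Hamming contribution. I would therefore analyse the two parts separately and then combine them via this calibration.

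First I would analyse the penalty term one set at a time. Fixing~$k$ and $2$-colouring the vertices of a cyclic tour by membership in~$C_k$, the quantity $\sum_{\text{edges}}|\mathbbm{1}_{C_k}(c_i)-\mathbbm{1}_{C_k}(c_j)|$ counts exactly the boundary edges of~$C_k$ along the cycle. In a cyclic binary sequence the maximal monochromatic runs of the two colours alternate, so this count is always even and at least~$2$, and it equals~$2$ if and only if the vertices of~$C_k$ form one contiguous arc. Since each~$C_k$ is nonempty and the appended zero column lies in no~$C_k$, both colours occur, so the value~$2$ is attainable; if~$C_k$ is not consecutive the count is at least~$4$. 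Summing over~$k$, the total penalty of any tour is at least $2(2p+1)q$, with equality precisely when every~$C_k$ is consecutive in the cyclic tour.

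Next I would bound the Hamming part and deduce property~(1). By \cref{lemma:tour} the Hamming contribution of a tour equals $2\cons(\pi(A))$ for the permutation~$\pi$ obtained by cutting at the zero column, and since each block of consecutive ones uses at least one of the~$p$ ones we have $\cons(\pi(A))\le p$; hence the Hamming part of \emph{every} tour is at most $2p<2p+1$. Taking any witness permutation~$\pi^\ast$ from the feasibility hypothesis and appending the zero column yields a tour~$T^\ast$ with penalty exactly $2(2p+1)q$ and Hamming part at most~$2p$, so its length is at most $2(2p+1)q+2p$. If the optimal tour~$T$ failed to make some~$C_k$ consecutive, its penalty would exceed $2(2p+1)q$ by at least $2(2p+1)=4p+2>2p$, forcing its length strictly above that of~$T^\ast$ and contradicting minimality. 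Thus every~$C_k$ is consecutive in~$T$, and because the cut is made at the zero column, which lies in no~$C_k$, consecutiveness is preserved in~$\pi(A)$, giving~(1).

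Property~(2) then follows from the same calibration: the permutations~$\pi'$ satisfying~(1) correspond exactly (up to reversal) to the tours in which all~$C_k$ are consecutive, and for each such tour the penalty is the constant $2(2p+1)q$ while the Hamming part is $2\cons(\pi'(A))$ by \cref{lemma:tour}, so minimizing total length over this family minimizes $\cons(\pi'(A))$; since~$T$ lies in this family by~(1), its cut permutation attains the minimum. I expect the main obstacle to be making the domination argument fully rigorous, namely verifying the clean even/$\ge 2$ behaviour of the cyclic penalty term and, crucially, confirming that \cref{lemma:tour} still delivers the Hamming part \emph{exactly} -- so that the bound $\cons\le p$ correctly calibrates the coefficient $2p+1$ -- even though~$T$ is optimal under the augmented distance~$D$ rather than under the bare Hamming distance. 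The rest is routine bookkeeping.
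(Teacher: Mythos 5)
Your proposal is correct and follows essentially the same route as the paper's proof: both rest on the calibration that the per-violation penalty $2(2p+1)$ exceeds the maximum possible Hamming contribution $2p$ of any tour, compare the optimal tour against the tour induced by a witness permutation from the feasibility hypothesis to derive property~(1) by contradiction, and obtain property~(2) by observing that the penalty term is constant across all tours respecting the $C_k$ so that minimizing length reduces to the situation of \cref{lemma:tour}. Your write-up is in fact somewhat more careful than the paper's (e.g., the explicit parity argument for the boundary-edge count and the remark that cutting at the zero column, which lies in no $C_k$, preserves consecutiveness), but the underlying argument is identical.
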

\begin{proof}
  Let~$\pi$ be the permutation as defined above.
  We first show by contradiction that~$\pi$ satisfies (1).
  Assume to the contrary that~$\pi$ does not satisfy (1) and consider any permutation~$\pi^\prime$ of the columns of~$A$ that satisfies (1). This permutation exists by assumption.
  Consider the tour $T^\prime=(v_{n+1})\star\pi$.
  The length of~$T^\prime$ is at most $2q(2p+1)+2p$, as there can be at most~$p$ consecutive blocks of ones in~$\pi^\prime(A)$, each contributing two to the length of~$T^\prime$.
  The value $2q(2p+1)$ is due to the fact that we ``leave'' or ``enter'' vertices corresponding to the set of columns~$C_k$,~$1\le k\le q$, exactly twice.
  To the contrary, the length of~$T$ is at least $2(q+1)(2p+1)$.
  Hence,~$T$ cannot be a tour of minimal length, yielding a contradiction.
  It is clear that~$\pi$ also satisfies (2), as increases in the length of the tour~$T$, also increases the number of consecutive ones of the matrix~$\pi(A)$ due to the same reasoning as in \cref{lemma:tour}. \qed
\end{proof}
We can directly apply the above lemma to find a permutation of the columns of a matrix~$A$ with minimum blocks of consecutive ones among the permutations~$\pi$ that have $\cons(r_{i_1}^{\pi(A)}))=\cons(r_{i_1}^{\pi(A)})=1$ for $i_1,i_2\in m_A$:
We simply define $C_1=\{j\in [n_A]\mid A_{i_1,j}=1\}$ and $C_2=\{j\in [n_A]\mid A_{i_2,j}=1\}$, and apply the reduction to TSP as outlined in \cref{lemma:hierarchical}.
Clearly, $C_1$ and $C_2$ satisfy the requirements of \cref{lemma:hierarchical}, as a matrix with two rows always has the C1P.
In our experiments we show how adding these constraints affects the runtime and the number of blocks of consecutive ones.
Note, however, that the result of \cref{lemma:hierarchical} allows us to constrain column orders of a matrix in far more general ways.

\subsection{A Weighted Version}\label{sec:constraintsweight}
The shortcoming of the approach described in \cref{sec:prioritiessets} is that we can only restrict two sets to be drawn in one line segment.
If we want to involve more sets in this, we can use a model with weighted sets (corresponding to rows in a binary matrix~$A$).
Then, if a weight of row~$r_i^A$ is bigger than a weight of~$r_j^A$, it is ``worse'' to have more blocks of consecutive ones for~$r_i^A$ than it is for~$r_j^A$.
Formally, we are given a binary matrix~$A$ and a weight function $f:[m_A]\to \mathbb{N}$, and we want to find a permutation~$\pi$ of the columns of~$A$ that minimizes $\sum_{i=1}^{m_A}f(i)\cons(r_i^{\pi(A)})$.
Solving this problem is straight-forward with a TSP-model:
We construct the matrix~$A^\prime$ by appending a column of zeros to the right of~$A$.
We then create a distance matrix $D$ for~$G(A^\prime)$ such that $D_{i,j}=\sum_{k=1}^{m_A}f(k)|A^\prime_{k,i}-A^\prime_{k,j}|$.
This distance matrix corresponds to weighted Hamming distances.
Then, we simply find the tour~$T$ of minimal total distance in~$G(A^\prime)$ under $D$, and obtain the desired permutation~$\pi$ from~$T$ as in \cref{lemma:tour}.
We conduct experiments for this weighted version of consecutive block minimization in \cref{sec:experiments}.

\subsection{Hierarchical Constraints}
We now to present an algorithm that allows for more general constraints on the allowed column orders of a binary matrix, restricting column orders by \emph{PQ-trees}.
We adopt the definition of PQ-trees of Burkard et al.\ \cite{burkardTravellingSalesmanPQtree1996}, as our algorithm directly applies their results.
A PQ-tree~$T$ over the set~$[n]$ is a rooted, ordered tree whose \emph{leaves} are pairwise distinct elements of~$[n]$ and whose \emph{internal} nodes are distinguished as either~$P$-nodes or~$Q$-nodes.
The set~$\textsc{leaf}(T)$ denotes the leaves of~$T$.

Every PQ-tree~$T$ represents a set~$\Pi(T)$ of permutations of~$\textsc{leaf}(T)$ as follows.
If~$T$ consists of a single leaf $i\in [n]$, then $\Pi(T)=\{(i)\}$.
Otherwise, the root~$r(T)$ of~$T$ is a~$P$-node or a~$Q$-node.
Let $v_1,\dots,v_m$ denote the children of~$r(T)$, ordered from left to right, and let~$T_i$ denote the maximal subtrees rooted at~$v_i$, $1\le i\le m$.
If~$r(T)$ is a~$P$-node, then
\[\Pi(T)=\bigcup_{\psi\in \Pi_m}\Pi(T_{\psi(1)})\star\Pi(T_{\psi(2)})\star\dots \star \Pi(T_{\psi(m)}),\]
and if~$r(T)$ is a~$Q$-node, then
\[\Pi(T)=\Pi(T_1)\star \Pi(T_2)\star\dots\star \Pi(T_m)\cup \Pi(T_m)\star \Pi(T_{m-1})\star\dots\star\Pi(T_1).\]
Informally, children of~$P$-nodes can be permuted arbitrarily, while children of~$Q$-nodes can only be reversed.

For applications of PQ-trees we refer to Booth and Lueker~\cite{boothTestingConsecutiveOnes1976}.
They can be used to model allowed column orders of a binary matrix or, equivalently, the orders of overlaps in a linear diagram; for example, if overlaps illustrated by a linear diagram have some hierarchical relations between them and should not be permuted arbitrarily, then we might represent this by a PQ-tree accordingly.
If the maximum degree of the PQ-tree~$T$ that represents column orders of a binary matrix~$A$ has small maximum degree, then the permutation $\pi\in\Pi(T)$ that minimizes $\cons(\pi(A))$ can be computed efficiently.
\begin{figure}[!tb]
  \centering
  \includegraphics{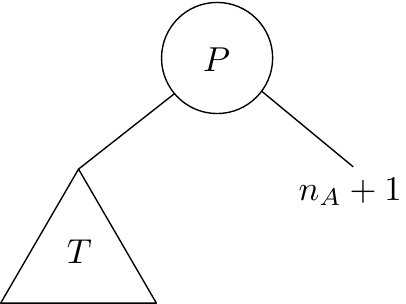}
  \caption{Construction of PQ-tree in \cref{thm:PQtree}.}
  \label{fig:pqtree}
\end{figure}
\begin{theorem}\label{thm:PQtree}
  Let~$A$ be a binary matrix and let~$T$ be a PQ-tree of maximum degree~$d$ such that $\Pi(T)\subseteq \Pi_{n_A}$.
  The permutation $\pi\in \Pi(T)$ that minimizes $\cons(\pi(A))$ can be found in time $\mathcal{O}(\max(m_A\cdot n_A^2,2^d\cdot n_A^3))$.
\end{theorem}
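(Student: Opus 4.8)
The plan is to reduce the problem to a travelling-salesperson instance whose feasible tours are exactly those respecting a PQ-tree, and then to invoke the algorithm of Burkard et al.~\cite{burkardTravellingSalesmanPQtree1996} for the TSP restricted to a PQ-tree. Recall from \cref{lemma:tour} that, after appending a column of zeros to obtain $A^\prime$ and building $G(A^\prime)$ with the Hamming-distance matrix $D_{i,j}=\hamdist(c^{A^\prime}_i,c^{A^\prime}_j)$, every tour of $G(A^\prime)$ that is cut at the appended zero column yields a permutation $\pi$ of the columns of $A$ whose associated tour length equals $2\cdot\cons(\pi(A))$, and conversely. Hence minimizing $\cons(\pi(A))$ over $\pi\in\Pi(T)$ is the same as finding a shortest tour in $G(A^\prime)$ among those tours whose cut at the zero column yields a permutation in $\Pi(T)$.

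First I would encode the constraint ``$\pi\in\Pi(T)$ together with the zero-column cut'' into a single PQ-tree $T^\prime$ over the $n_A+1$ columns: let the root of $T^\prime$ be a node of degree two whose two children are the leaf representing the appended zero column and the root of $T$. Then $\Pi(T^\prime)=\{(n_A+1)\}\star\Pi(T)\;\cup\;\Pi(T)\star\{(n_A+1)\}$, so in every arrangement admitted by $T^\prime$ the zero column sits at one end and the remaining columns are ordered according to $\Pi(T)$. Reading a cyclic tour of $G(A^\prime)$ starting and ending at the zero column recovers exactly such an arrangement, which is precisely the cut used in \cref{lemma:tour}. Here I would use that $\Pi(T)$ is closed under reversal (a routine induction over the tree: single leaves are trivially reversal-closed, and both $P$- and $Q$-nodes preserve this property), so the undirected tour chosen by the TSP solver corresponds to a valid permutation regardless of traversal direction.

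Next I would run the algorithm of Burkard et al.\ on $G(A^\prime)$ with distance matrix $D$ and PQ-tree $T^\prime$; it returns a shortest tour consistent with $T^\prime$, from which the optimal $\pi$ is extracted as in \cref{lemma:tour}. Properties (1) and (2) then follow immediately: consistency with $T^\prime$ forces $\pi\in\Pi(T)$, and among all such permutations the returned tour has minimum length, which by \cref{lemma:tour} equals $2\cdot\cons(\pi(A))$, so $\pi$ minimizes $\cons(\pi(A))$ over $\Pi(T)$.

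For the running time, computing $D$ requires the Hamming distance of each of the $\mathcal{O}(n_A^2)$ column pairs, each in time $\mathcal{O}(m_A)$, giving $\mathcal{O}(m_A\cdot n_A^2)$. The tree $T^\prime$ has $n_A+1$ leaves, and since its new root has degree two its maximum degree is $\max(d,2)=d$ (internal PQ-tree nodes have degree at least two), so the PQ-tree TSP algorithm runs in time $\mathcal{O}(2^d\cdot n_A^3)$; building $T^\prime$ and extracting $\pi$ are negligible, yielding the claimed $\mathcal{O}(\max(m_A\cdot n_A^2,\,2^d\cdot n_A^3))$. The main obstacle I anticipate is the first reduction step: verifying that the augmentation by the zero column together with the degree-two root faithfully translates the linear-arrangement constraint $\pi\in\Pi(T)$ into the cyclic-tour consistency notion used by Burkard et al., including the careful treatment of where the cycle may be opened and the reversal symmetry of $\Pi(T)$.
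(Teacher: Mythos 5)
Your proposal is correct and follows essentially the same route as the paper: append a zero column, attach its leaf together with the original tree $T$ under a new degree-two root to form $T^\prime$, run the PQ-tree TSP algorithm of Burkard et al.\ with the Hamming-distance matrix, and extract $\pi$ via \cref{lemma:tour}, with the $\mathcal{O}(m_A\cdot n_A^2)$ cost of building $D$ accounting for the maximum in the bound. You are in fact slightly more careful than the paper in spelling out the reversal-closure of $\Pi(T)$ and the placement of the cut at the zero column (and whether the new root raises the maximum degree to $d+1$, as the paper states, or leaves it at $d$ is immaterial, since $2^{d+1}=\mathcal{O}(2^d)$).
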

\begin{proof}
  We apply a result of Burkard et al.~\cite{burkardTravellingSalesmanPQtree1996} that states that for a PQ-tree~$T$ with maximum degree~$d$, and an $n \times n$ distance matrix~$D$, the shortest TSP tour for the matrix~$D$ contained in~$\Pi(T)$ can be computed in $\mathcal{O}(2^d\cdot n^3)$ overall time.
  
  Let~$A$ be a binary matrix and let~$T$ be a PQ-tree of maximum degree~$d$ such that $\Pi(T)\subseteq\Pi_{n_A}$.
  Let~$A^\prime$ be the binary matrix obtained from~$A$ by appending a column of zeros to the right of~$A$.
  We construct a PQ-tree~$T^\prime$ such that $\Pi(T^\prime)\subseteq \Pi_{n_A+1}$.
  The PQ-tree~$T^\prime$ consists of a~$P$-node that has two children: The leaf $n_A+1$ and the tree~$T$ rooted at~$r(T)$ (see \cref{fig:pqtree}).
  Notice that the maximum degree of~$T^\prime$ is at most $d+1$.
  Let~$D$ be the distance matrix corresponding to edge weights $D_{i,j}=\hamdist(c^{A^\prime}_i,c^{A^\prime}_j)$ in~$G(A^\prime)$.
  Due to the result of Burkard et al. we can find in time $\mathcal{O}(2^d\cdot n_A^3)$ a tour of minimum length in $G(A^\prime)$ that is contained in~$\Pi(T^\prime)$.
  By \cref{lemma:tour} and the construction of~$T^\prime$, we can obtain from this tour a permutation $\pi\in\Pi(T)$ that minimizes~$\cons(\pi(A))$. We need $m_A\cdot n_A^2$ time to construct the distance matrix $D$, thus we need to account for the possibility that $m_A\cdot n_A^2>2^d\cdot n_A^3$, taking the maximum of both. \qed
\end{proof}

\section{Experiments}\label{sec:experiments}
In this section, we present an experimental evaluation of the algorithms proposed in \cref{sec:TSP}, comparing them with state-of-the art heuristics.
\subsection{Setup and Test Data}
\paragraph*{Setup.}
All experiments were performed on a desktop machine with an Intel i7-8700K processor.
The implementations of algorithms were done in Python 3.7.
To solve our TSP models, we used the Concorde TSP solver\footnote{\url{https://www.math.uwaterloo.ca/tsp/concorde.html}} with the QSopt linear programming solver\footnote{\url{http://www.math.uwaterloo.ca/~bico/qsopt/}}. The code is available online~\cite{doblalex_2022}.
\paragraph*{Test data.}
We consider binary matrices from two different sources.
The first set of instances, referred to as~$T_1$, is taken from Chapman et al.\ \cite{chapmanDrawingAlgorithmsLinear2021} and is available online\footnote{\url{https://doi.org/10.17869/enu.2021.2748170}}.
These instances consist of 440 binary matrices with 5 sets and 10 overlaps up to 50 sets and 70 overlaps.
Chapman et al.~\cite{chapmanDrawingAlgorithmsLinear2021} provide results of their algorithms for minimizing line segments for these instances.

The second set of instances, referred to as~$T_2$, comes from a work by Jacobsen et al.\ \cite{jacobsenMetroSetsVisualizingSets2021} and is available online\footnote{\url{https://osf.io/nvd8e/}}.
The set systems represented by these instances are taken out from a large real-world dataset coming from the Kaggle ``What's Cooking'' competition \cite{amburgClusteringGraphsHypergraphs2020a}.
The sizes of these instances range from 20 overlaps and 6 sets to 160 overlaps and 20 sets.
Overall, there are a total of 4060 instances.

\subsection{Computing Optimal Linear Diagrams}
The first set of experiments considers the task of computing optimal linear diagrams, or equivalently, finding column orderings of the instances that minimize the number of blocks of consecutive ones.

\paragraph*{Algorithms.}
We include comparisons of the following algorithms.
\begin{itemize}
  \item TSPConcorde: This algorithm from \cref{sec:lindiagTSP}uses our TSP   model and the Concorde TSP solver to solve the problem optimally.
        The reported runtimes include generating input files for the Concorde solver and reading its output.
  \item HeuristicRodgers: This algorithm is a python implementation of a greedy algorithm by Rodgers et al.\ \cite{rodgersVisualizingSetsLinear2015}.
        A pairwise similarity measure between overlaps is defined, and then an overlap order is computed iteratively, trying to place similar overlaps next to each other.
        Rodgers et al.\ provide an online demo that implements this algorithm\footnote{\url{http://www.eulerdiagrams.com/linear/generator/}}.
  \item Supervenn: This algorithm is from a recent GitHub project \cite{fedorSupervennPreciseEasytoread2022}.
        For a set of 10000 seeds it defines a pairwise similarity measure between overlaps and then applies a heuristic to compute an overlap order.
  \item BestChapman: Chapman et al.~\cite{chapmanDrawingAlgorithmsLinear2021} compare several heuristic methods to compute overlap orderings of linear diagrams that minimize the drawn line segments.
        They report the number of line segments of overlap orders computed by their algorithms for test set $T_1$.
        As they do not provide the code for all algorithms, and the explanation of the remaining algorithms is incomplete, we had to restrict the evaluation of their approaches to test set $T_1$.
        For an instance of $T_1$, we assume that the algorithm BestChapman is any algorithm of Chapman et al.\ that computes an overlap ordering with the least amount of blocks of consecutive ones.
        They do not provide the runtimes of their algorithms in their abstract~\cite{chapmanDrawingAlgorithmsLinear2021}, so we cannot either. 
\end{itemize}

\paragraph*{Comparison.}
TSPConcorde by design computes optimal column orderings. Hence, we report the relative and absolute optimality gaps for the other algorithms.
That is, let $\text{blocks}(\mathcal{A},I)$ be the number of blocks of consecutive ones of a column ordering computed by algorithm $\mathcal{A}$ for instance $I$.
Then the relative optimality gap in percent is $100\cdot (\frac{\text{blocks}(\mathcal{A},I)}{\text{blocks}(\text{TSPConcorde},I)}-1)$ and the absolute optimality gap is $\text{blocks}(\mathcal{A},I)-\text{blocks}(\text{TSPConcorde},I)$.
For a set of instances, we report these value averaged.
For TSPConcorde we provide the average number of consecutive blocks of ones per row, as the optimality gap is always zero.
We also provide the mean runtime for the same set of instances.
Results are broken down by the number of columns, as the factorial of the number of columns determines the size of the possible search space for an algorithm.  

\paragraph*{Test set $T_1$.}
\begin{table}[!t]
  \centering
  \caption{Results for test set $T_1$. White columns depict the mean relative/absolute optimality gaps (except TSPConcorde); gray columns depict the mean runtimes. For the algorithms of Chapman we do not know the runtimes.}
  \label{table:expt1}
  \begin{tabularx}{\linewidth}{l >{\centering\arraybackslash}X >{\centering\arraybackslash\columncolor{lightgray!15}}X >{\centering\arraybackslash}X >{\centering\arraybackslash\columncolor{lightgray!15}}X >{\centering\arraybackslash}X >{\centering\arraybackslash\columncolor{lightgray!15}}X >{\centering\arraybackslash}X }
    \toprule
    \multirow{4}*{\#columns} & \multicolumn{2}{c}{TSPConcorde} & \multicolumn{2}{c}{HeuristicRodgers} & \multicolumn{2}{c}{Supervenn} & Best Chapman                                    \\
                             & blocks /                        & $t$                                  & gap                           & $t$          & gap         & $t$  & gap         \\
                             & row                             & [ms]                                 & [rel./abs.]                   & [ms]         & [rel./abs.] & [ms] & [rel./abs.] \\\midrule
    10                       & 1.7                             & 7                                    & 3.3/0.9                       & 0            & 1.6/0.8     & 853  & 0.0/0.0     \\
    20                       & 2.8                             & 13                                   & 6.0/3.0                       & 1            & 3.3/1.8     & 1360 & 0.0/0.0     \\
    30                       & 4.0                             & 22                                   & 6.0/5.2                       & 1            & 3.4/3.1     & 1861 & 0.2/0.3     \\
    50                       & 6.0                             & 69                                   & 6.9/9.3                       & 4            & 3.7/5.6     & 2969 & 0.4/0.5     \\
    70                       & 7.9                             & 340                                  & 8.1/13.3                      & 7            & 4.7/8.0     & 4192 & 0.5/0.8     \\
    \bottomrule
  \end{tabularx}
\end{table}
\cref{table:expt1} shows the results for test set $T_1$.
The simple heuristic of Rodgers et al.~\cite{rodgersVisualizingSetsLinear2015} has the smallest runtimes, while also performing worst with regard to optimality gaps.
The runtimes of Supervenn are rather high, while the optimality gaps are lower when compared to HeuristicRodgers, resulting from the 10000 runs of a heuristic, each skewed with a different seed value. 
While the problem of consecutive block minimization is \NP-complete, TSPConcorde solved all instances optimally. 
The average runtime for the largest class of instances from~$T_1$ is still less than a second.
It is worth mentioning that optimality gaps of mostly under 10\% indicate that the heuristics are quite good.

The heuristics of Chapman et al.~\cite{chapmanDrawingAlgorithmsLinear2021} solved 340 of the 440 instances optimally.
For the remaining instances, the maximum difference between the optimal number of consecutive blocks and their best solution is 3.
This yields the fairly small optimality gaps for BestChapman, while we expect that these values would increase for larger instances, a pattern that just starts to appear in \cref{table:expt1}.

\paragraph*{Test set $T_2$.}
\begin{table}[!t]
  \centering
  \caption{Results for test set $T_2$. White columns depict the mean relative/absolute optimality gaps (except TSPConcorde); gray columns depict the mean runtimes.}
  \label{table:expt2}
  \begin{tabularx}{\linewidth}{l >{\centering\arraybackslash}X >{\centering\arraybackslash\columncolor{lightgray!15}}X >{\centering\arraybackslash}X >{\centering\arraybackslash\columncolor{lightgray!15}}X >{\centering\arraybackslash}X >{\centering\arraybackslash\columncolor{lightgray!15}}X}
    \toprule
    \multirow{3}*{\#columns} & \multicolumn{2}{c}{TSPConcorde} & \multicolumn{2}{c}{HeuristicRodgers} & \multicolumn{2}{c}{Supervenn}                              \\
                             & blocks /                        & $t$                                  & gap                           & $t$  & gap         & $t$   \\
                             & row                             & [ms]                                 & [rel./abs.]                   & [ms] & [rel./abs.] & [ms]  \\\midrule
    20-50                    & 1.7                             & 17                                   & 8.4/2.0                       & 1    & 7.7/1.8     & 1949  \\
    55-80                    & 1.8                             & 23                                   & 10.0/2.5                      & 2    & 8.4/2.2     & 3642  \\
    85-110                   & 1.9                             & 36                                   & 10.9/2.8                      & 4    & 9.2/2.5     & 5408  \\
    115-140                  & 2.0                             & 82                                   & 11.1/3.1                      & 6    & 9.8/2.8     & 7782  \\
    145-160                  & 2.0                             & 71                                   & 10.7/3.0                      & 9    & 9.8/2.9     & 10133 \\
    \bottomrule
  \end{tabularx}
\end{table}
\begin{figure}[!t]
  \centering
  \includegraphics[width=.71\textwidth]{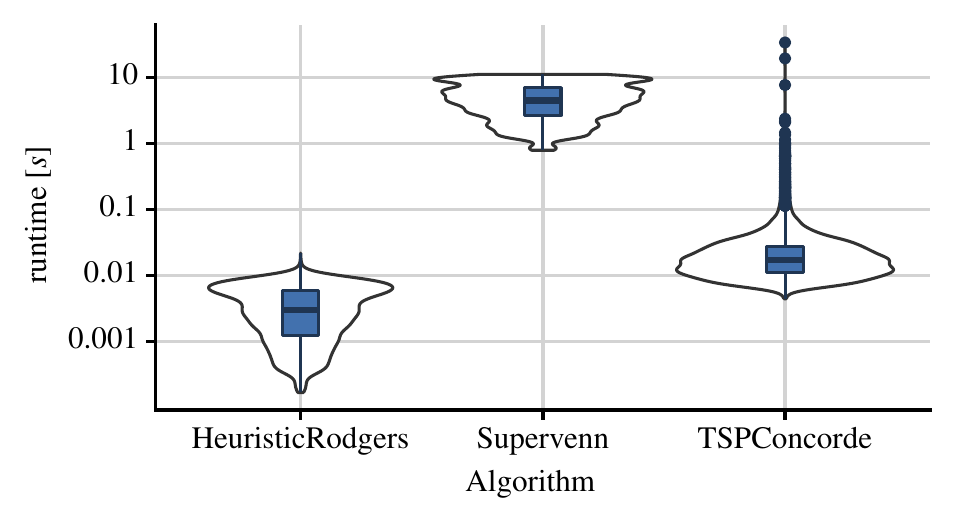}
  \caption{Violin and box plot showing runtimes for all instances from $T_1$ and $T_2$.}
  \label{fig:runtimes}
\end{figure}
\cref{table:expt2} shows results for test set $T_2$.
TSPConcorde is able to solve all instances optimally, the mean runtime still being well below 100ms, even for instances with up to 160 columns.
For Supervenn and HeuristicRodgers we see similar results as in the previous test set.
While Supervenn has slightly better optimality gaps, HeuristicRodgers takes only a thousandth of the time of Supervenn.
Again, optimality gaps increase with increasing number of columns to about 10 percent compared to the optimal solutions. 

\paragraph*{Runtimes.} \cref{fig:runtimes} shows a boxplot and violin plot of the runtimes of the three algorithms HeuristicRodgers, Supervenn, and TSPConcorde for the combined test set $T_1\cup T_2$.
The $y$-axis is scaled logarithmically.
It again reflects that Supervenn takes much longer than HeuristicRodgers, while the runtimes for both algorithms do not contain outliers as their runtime is rather ``deterministic'', in the sense that their runtime is accurately represented as a polynomial function of the number of columns of an instance.
On the contrary, the runtimes of TSPConcorde contain a multitude of outliers, while most runtimes are still below 100ms.
Only two instances take more than 10 seconds to solve.

\subsection{Constraints}
Next, we  present experiments on how constraints on the column order affect the runtime and the number of blocks of TSPConcorde.
Namely, we implemented the constraints from \cref{sec:prioritiessets,sec:constraintsweight} that either specify that two sets/rows have to be represented as a single line segment/consecutive blocks of ones, or give specific weights to sets. The evaluation for both constraints 
works as follows.
\begin{itemize}
  \item Two sets as single line segment: We pick uniformly at random for each instance in our test set $T_1\cup T_2$ two sets that have to be drawn as a single line segments, and then apply the reduction to TSP described in \cref{sec:prioritiessets}, and solve the resulting TSP-instance with the Concorde TSP-solver. We identify this approach by TSPConcordeFS for ``\textbf{f}ixed \textbf{s}ets''.
  \item Weighted sets: For each matrix $A$ in the test set $T_1\cup T_2$ we specify a weight function $f:[m_A]\to\N$  %
  that assigns to each set a unique integer weight in $[m_A]$ uniformly at random. Then, we apply the reduction to TSP as described in \cref{sec:constraintsweight} and solve the resulting TSP-instance with the Concorde TSP-solver. We identify this approach by TSPConcordeW for ``\textbf{w}eighted''.
\end{itemize}

\begin{table}[!t]
  \centering
  \caption{Results for test set $T_1$ and constrained versions. White columns depict mean relative/absolute optimality gaps (except TSPConcorde); gray columns depict mean runtimes.}
  \label{table:expt1constr}
  \begin{tabularx}{\linewidth}{l >{\centering\arraybackslash}X >{\centering\arraybackslash\columncolor{lightgray!15}}X >{\centering\arraybackslash}X >{\centering\arraybackslash\columncolor{lightgray!15}}X >{\centering\arraybackslash}X >{\centering\arraybackslash\columncolor{lightgray!15}}X}
    \toprule
    \multirow{4}*{\#columns} & \multicolumn{2}{c}{TSPConcorde} & \multicolumn{2}{c}{TSPConcordeFS} & \multicolumn{2}{c}{TSPConcordeW}                             \\
                             & blocks /                        & $t$                               & gap                              & $t$  & gap         & $t$  \\
                             & row                             & [ms]                              & [rel./abs.]                      & [ms] & [rel./abs.] & [ms] \\\midrule
    10                       & 1.7                             & 7                                 & 4.1/1.7                          & 8    & 2.6/0.9     & 7    \\
    20                       & 2.8                             & 13                                & 4.8/3.7                          & 17   & 4.5/3.0     & 11   \\
    30                       & 4.0                             & 22                                & 6.2/6.7                          & 37   & 6.5/5.9     & 25   \\
    50                       & 6.0                             & 69                                & 6.4/10.2                         & 104  & 8.7/12.1    & 61   \\
    70                       & 7.9                             & 340                               & 7.0/14.8                         & 210  & 9.4/16.7    & 84   \\
    \bottomrule
  \end{tabularx}
\end{table}
\begin{table}[!t]
  \centering
  \caption{Results for test set $T_2$ and constrained versions. White columns depict the mean relative and absolute optimality gaps (except TSPConcorde); gray columns depict the mean runtimes.}
  \label{table:expt2const}
  \begin{tabularx}{\linewidth}{l >{\centering\arraybackslash}X >{\centering\arraybackslash\columncolor{lightgray!15}}X >{\centering\arraybackslash}X >{\centering\arraybackslash\columncolor{lightgray!15}}X >{\centering\arraybackslash}X >{\centering\arraybackslash\columncolor{lightgray!15}}X}
    \toprule
    \multirow{3}*{\#columns} & \multicolumn{2}{c}{TSPConcorde} & \multicolumn{2}{c}{TSPConcordeFS} & \multicolumn{2}{c}{TSPConcordeW}                             \\
                             & blocks /                        & $t$                               & gap                              & $t$  & gap         & $t$  \\
                             & row                             & [ms]                              & [rel./abs.]                      & [ms] & [rel./abs.] & [ms] \\\midrule
    20-50                    & 1.7                             & 17                                & 3.6/0.9                          & 28   & 4.0/0.9     & 16   \\
    55-80                    & 1.8                             & 23                                & 3.1/0.8                          & 36   & 4.8/1.2     & 27   \\
    85-110                   & 1.9                             & 36                                & 3.0/0.8                          & 57   & 5.2/1.4     & 30   \\
    115-140                  & 2.0                             & 82                                & 3.2/0.9                          & 79   & 6.0/1.7     & 40   \\
    145-160                  & 2.0                             & 71                                & 3.1/0.8                          & 99   & 6.2/1.7     & 38   \\
    \bottomrule
  \end{tabularx}
\end{table}

\begin{figure}[!t]
  \centering
  \includegraphics[width=.71\textwidth]{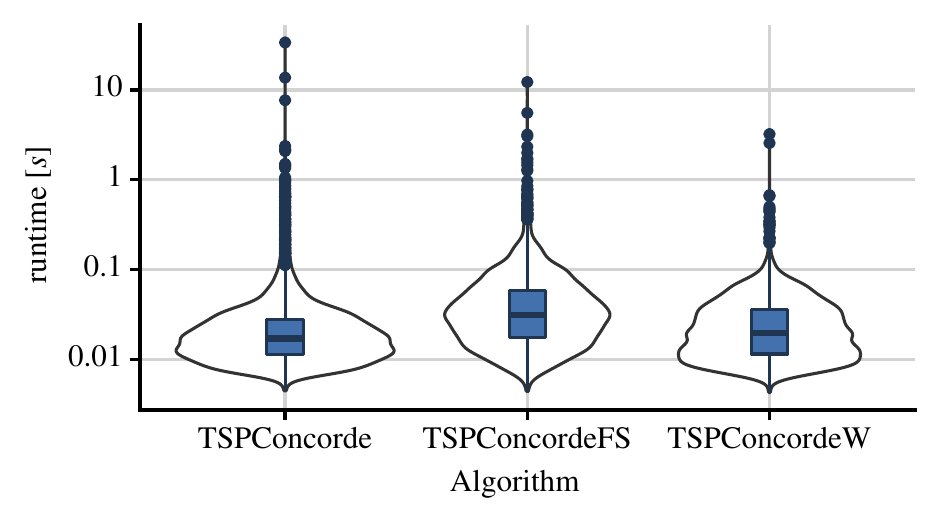}
  \caption{Runtimes of constrained algorithms for all instances from $T_1$ and $T_2$.}
  \label{fig:runtimesconstr}
\end{figure}

\cref{table:expt1constr} shows runtimes and optimality gaps for test set $T_1$. We observe that adding constraints does not influence runtimes of the TSP solver significantly. Furthermore, by adding constraints we may not be able to reach the optimal number of line segments anymore and see a maximum optimality gap of 10\%. The results for test set $T_2$ are similar and are given in \cref{table:expt2const}.

\cref{fig:runtimesconstr} shows a box and violin plot of runtimes for TSPConcorde and the constrained versions thereof, further suggesting that adding constraints does not significantly influence runtimes.

\section{Conclusion}
We have studied the algorithmic complexity of computing optimal linear diagrams and observed that it is equivalent to a related problem on binary matrices. 
Despite its \NP-completeness, even in a restricted setting, we have formulated a TSP model for solving the problem optimally. In an experimental study, we have seen that a state-of-the-art TSP solver can in fact solve a large set of instances obtained from our model optimally, most of them within few milliseconds. 
Hence it is feasible to strive for optimal linear diagrams in most practical settings and thus reduce the number of line segments by up to 10\% compared to the best heuristics, which, otherwise, are faster by one to two orders of magnitude. 
\bibliographystyle{splncs04}
\bibliography{literaturenonupdated.bib}

\end{document}